\newcommand{\makeclause}[3]%
  {#1\ \leftarrow\ #2\ \&\ \ldots\ \&\ #3}
\newcommand{\dispform}[2]%
      {
       \begin{equation}
       #2
       \end{equation}
      }
\newcounter{equn}[section]
\newcommand{\eqnsect}[1]%
      {\refstepcounter{equn}
       
       \begin{equation}
       #1
       \end{equation}
      }
\newcommand{\ARGS}%
    {\mbox{lub}\ \pi_{1}[V],\ \ldots,\ \mbox{lub}\ \pi_{n}[V]}
\newtheorem{NS-Experiment}{Experiment}
\newtheorem{NS-Fact}{Fact}
\newtheorem{Example}{Example}
\newtheorem{NS-Example}{Example}
\newtheorem{Definition}{Definition}
\newtheorem{NS-Definition}{Definition}
\newtheorem{Proposition}{Proposition}
\newtheorem{NS-Proposition}{Proposition}
\newtheorem{NS-Homework}{Homework}
\newtheorem{NS-Theorem}{Theorem}
\newtheorem{NS-Observation}{Observation}
\newtheorem{NS-Lemma}{Lemma}
\newtheorem{Remark}{Remark}
\newtheorem{NS-Remark}{Remark}
\newtheorem{NS-Corollary}{Corollary}
\newtheorem{NS-Claim}{Claim}
\newtheorem{NS-claim}{claim}
\newtheorem{NS-Directive}{Directive}
\newtheorem{NS-Problem}{Problem}
\newtheorem{NS-Solution}{Solution}
\newtheorem{Notation}{Notation}
\newtheorem{NS-Notation}{Notation}
\newtheorem{NS-Note}{Note}
\newtheorem{NS-Question}{Question}
\newtheorem{NS-Answer}{Answer}
\newtheorem{NS-Counterexample}{Counterexample}
\newcommand{\st}\backepsilon
\newcommand{\suchthat}{\mathrel{\ooalign{$\ni$\cr\kern-1pt$-$\kern-6.5pt$-$}}}
\begin{document}
	\mainmatter              % start of a contribution
	\title{Quantum State Diffusion on a Graph}
	\titlerunning{Quantum State Diffusion on a Graph}  % abbreviated title (for running head)
	%                                     also used for the TOC unless
	%                                     \toctitle is used
	%
	\author{John C Vining III\inst{1} \and Howard A Blair\inst{1}
	}
	%
	%
	%%%% list of authors for the TOC (use if author list has to be modified)
	\tocauthor{Jack Vining, Howard A Blair}
	\institute{Department of Electrical Engineering and Computer Science, Syracuse University\\
		\email{jcvining@syr.edu}
	}
	
	\maketitle              % typeset the title of the contribution
	
	\begin{abstract}
		Quantum walks have frequently envisioned the behavior of a quantum state traversing a classically defined, generally finite, graph structure. While this approach has already generated significant results, it imposes a strong assumption: all nodes where the walker is not positioned are quiescent. This paper will examine some mathematical structures that underlie state diffusion on arbitrary graphs, that is the circulation of states within a graph. We will seek to frame the multi-walker problem as a finite quantum cellular automaton. Every vertex holds a walker at all times. The walkers will never collide and at each time step their positions update non-deterministically by a quantum swap of walkers at opposite ends of a randomly chosen edge. The update is accomplished by a unitary transformation of the position of a walker to a superposition of all such possible swaps and then performing a quantum measurement on the superposition of possible swaps. This behavior generates strong entanglement between vertex states which provides a path toward developing local actions producing diffusion throughout the graph without depending on the specific structure of the graph through blind computation.
		\keywords{graph theory, quantum computation, state diffusion, quantum cellular automata}
	\end{abstract}
	
	\section{Introduction}
	Graph walks can be seen as the behavior of a point wandering through a graph's vertex set, constrained to only transitions between vertices which share an edge.  A walker then can also be seen as modeling the diffusion of states through the graph.  As an illustration of the potential behavior of a quantum walking system, visualize a palette with several distinct colors where wells are vertices and colors are states.  Over time the wells of paint are mixed with their neighbors, resulting in a muddling of the colors in each.  As a quantum system, however, when we directly observe the contents of a specific well, we collapse it into one of the original, pure colors.  This collapse also effects the wells not measured, as the presence of a color in one well excludes the possibility of it's presence elsewhere, resulting in the zeroing out its probability everywhere else on the palette.  
	\subsection{Graph Walkers}
	The application of classical graph walks has been well established as a highly effective method of solving certain classes of problems.  Its natural extension to quantum graph walks which have demonstrated superior performance for certain problems\cite{kempe_quantum_2003}.  There are also classes of graph structures that quantum walks progress through exponentially faster than classical walks\cite{keating_localization_2007}.  
	\subsection{Traditional Quantum Graph Walker}
	Most implementations of quantum walkers are implemented by storing the position of a walker in a quantum register, then evolving this state based upon a known state transition system and a appropriate coin states\cite{brun_quantum_2003}.  Quantum walks which deal with the behavior of multiple walkers are less common, presumably due to the limits of current quantum devices and researchers focusing their work on more realizable implementations.  
	The work on these multi-walker models is primarily divided into two forms: non-interacting walkers and boson based systems.  The first, non-interacting states bypasses the very traits we wish to study in the future.  Boson based quantum states have been studied\cite{rohde_multi-walker_2011} and while they can allow for walkers interacting with one another, by their nature boson systems would allow multiple walkers to be physically present in the same region at the same time.  Measuring then could result in the destruction of walkers.
	
	\section{Alternative Quantum Graph Walker}
	The quantum walker model proposed here is implemented on an undirected graph $\Gamma = \left(V,E\right)$. The quantum system is composed of a quantum register for each vertex of $V$ and fixed quantum channels between each vertex's register and its neighbor's as defined in $E$.  It should be noted that we will impose an arbitrary direction on each edge as we construct the system to simplify the process as described.  By their nature a quantum channel\cite{van_meter_quantum_2014} is bidirectional as their behavior is expressible as a unitary, and therefore any action taken in one direction can be taken in the other.
	\subsection{State Labeling}
	To clarify the labeling style of our Dirac notation we specify the type and physical location of quantum states.  Let $\Gamma= \left(V,E\right)$ be an oriented graph, that is a directed graph where the edge between two vertices is never bi-directional.
	We lay out a brief grammar of our labels:
	\begin{bnf*}
		\bnfprod{vertex}
		{\bnftd{label of a vertex of the graph}}\\
		\bnfprod{edge}
		{\bnfpn{vertex} \bnfts{,} \bnfpn{vertex}}\\
		\bnfprod{number}{
			\bnfts{1} \bnfor \bnfts{2}
		}\\
		\bnfprod{neighborhood-set}
		{\bnfpn{vertex} \bnfts{,$\square$}}\\
		\bnfprod{vertex-state-label}{
			\bnfpn{vertex}
		}\\	\bnfprod{vertex-proxy-label}{
			\bnfpn{vertex} \bnfts{'}
		}\\
		\bnfprod{edge-flag-state-label}{
			\bnfpn{edge} \bnfts{,} \bnfpn{number}}\\
		\bnfprod{edge-state-label}{
			\bnfpn{edge} \bnfts{$,\square$}}\\
		\bnfprod{neighborhood-state-label}{
			\bnfpn{vertex} \bnfts{$,\square,\square$}\bnfor
			\bnfpn{vertex} \bnfts{$,\square,$} \bnfpn{number}}\\
	\end{bnf*}
	where all $\quoted{*-\text{label}}$ are valid quantum component labels.
	Thus for $\forall v\!\in V $ a $k$-level qudit is labeled $\quoted{v}$ and 
	$\forall \left(u,v\right)\! \in\! E$, a quantum register labeled $\quoted{u,v,\square}$ is composed of $2$ qubits labeled  
	$\lbrace \quoted{v,u,1},\quoted{v,u,2}\rbrace$.
	A vertex $v \in V$ then has a quantum register composed of two qudits labeled $\quoted{v}$ and $\quoted{v'}$ and a $2 \deg{v}$ qubit register labeled $\quoted{v,\square,\square}$ The vertex $\quoted{v}$ then has a quantum register structure:
	%\begin{tikzpicture}
	%	\node[rectangle,draw,anchor=east,minimum height=21pt](state){$
		%	_{\bnfpn{vertex-state-label}}$};
	%    \node[rectangle,draw,anchor=west,minimum height=19pt](outState1) at (state.east) {$\ket{}_{\bnfpn{edge-state-label}}^{\otimes j}$};
	%	\node[rectangle,draw,anchor=north,minimum height=19pt](outState2) at (outState1.south) {$\ket{}_{\bnfpn{edge-state-label}}^{\otimes j}$};
	%\end{tikzpicture}
	%
	%where the indexing of a vertex v's quantum register is:
	
	\begin{tikzpicture}
		\node[rectangle,draw,anchor=east,minimum height=19pt](state){$\ket{\psi}_{v}$};
		\node[rectangle,draw,anchor=north,minimum height=19pt](stateprox) at (state.south){$\ket{0}_{v'}$};
		\node[rectangle,draw,anchor=west](outState1) at (state.east) {$\bigTensor_{u\in \neigh{v}} \ket{\psi_{u,1}}_{v,u,1}$};
		\node[rectangle,draw,anchor=north](outState2) at (outState1.south) {$\bigTensor_{u\in \neigh{v}} \ket{\psi_{u,2}}_{v,u,2}$};
	\end{tikzpicture}
	
	Additionally we use alternative labeling notation to partition the quantum register in different ways to facilitate a clearer understanding of their behavior.
	That is 
	\[\ket{\psi_1}_{v,\square,1} = \bigTensor_{u \in \neigh{v}}{\ket{\psi_{u,1}}_{v,u,1}}
	\hspace{5em} \ket{\psi_2}_{v,\square,2} = \bigTensor_{u \in \neigh{v}}{\ket{\psi_{u,2}}_{v,u,2}}\]
	and
	\[\ket{\psi}_{u,v,\square} = \ket{\psi_1}_{u,v,1}\ket{\psi_2}_{u,v,2}\]
	Thus we can also view the vertex quantum register as
	\begin{equation*}
		\vcenter{\hbox{\begin{tikzpicture}
					\node[rectangle,draw,anchor=east,minimum height=19pt](state){$\ket{\psi}_{v}$};
					\node[rectangle,draw,anchor=north,minimum height=19pt](stateprox) at (state.south){$\ket{0}_{v'}$};
					\node[rectangle,draw,anchor=west](outState1) at (state.east) {$\ket{\psi_{1}}_{v,\square,1}$};
					\node[rectangle,draw,anchor=north](outState2) at (outState1.south) {$\ket{\psi_{2}}_{v,\square,2}$};
		\end{tikzpicture}}}\hspace{2em}=\hspace{2em}
		\vcenter{\hbox{\begin{tikzpicture}
					\node[rectangle,draw,anchor=east,minimum height=19pt](state){$\ket{\psi}_{v}$};
					\node[rectangle,draw,anchor=north,minimum height=19pt](stateprox) at (state.south){$\ket{0}_{v'}$};
					\node[rectangle,draw,anchor=north west,minimum height=38pt](outState1) at (state.north east) {$\bigTensor_{u\in \neigh{v}} \ket{\psi_{u}}_{v,u,\square}$};
		\end{tikzpicture}}}
	\end{equation*}
	This notation allows us to reference slices of the quantum register in ways that work toward illuminating their role and behavior.
	\subsection{State Preparation}
	We now describe how coin state space\cite{brun_quantum_2003} for the vertex $v$ is prepared.  The quantum space labeled $\quoted{v,\square,1}$
	is placed in a $\deg{v}$-W-State.  
	\begin{Definition}[$n$-W-State\cite{du_three_2000}]
		A n-W-State is a $n$ qubit quantum pure state such that when the quantum register is measured in the standard basis, exactly 1 qubit will measure 1 while all others measure 0.  
	\end{Definition}
	\begin{Example}
		The 3-W-State is
		\[\frac{1}{\sqrt{3}} \left( \ket{001} + \ket{010} +\ket{100} \right)\]
	\end{Example}
	\begin{Definition}[Bell Paired]
		Two quantum components $\quoted{u}$ and $\quoted{v}$ are bell paired if whenever measured in the standard basis, their states are identical, that is if $\quoted{u}$ is in the state: 
		\begin{equation}
			\ket{\psi}_u = \sum_{i \in X}{\alpha_{i} \ket{i}_{u}}
		\end{equation}
		then the composite quantum system of $\quoted{u}$ and $\quoted{v}$ is:
		\begin{equation}
			\bigTensor_{i \in X}{\alpha_{i} \ket{i}_{u} \ket{i}_{v}}
		\end{equation}
	\end{Definition}
	The quantum register labeled $\quoted{v,\square,1}$
	is then bell paired with %placed in superposition with
	$\quoted{v,\square,2}$
	in the state:\[
	\sum_{t \in \neigh{v}}{
		\frac{1}{\sqrt{\deg{v}}}
		\bigTensor_{u \in \neigh{v}}{
			\ket{\delta_{t,u}}_{v,u,1}
			\ket{\delta_{t,u}}_{v,u,2}
		}
	}
	\]
	where $\delta_{t,u}$ is the Kronecker delta on the labels $\quoted{t}$ and $\quoted{u}$.
	\subsection{Transformations}
	We then have the full graph in the state
	\[
	\bigTensor_{v \in V}{
		\ket{\psi_v}_{v} \otimes
		\sum_{t \in \neigh{v}}{
			\frac{1}{\sqrt{\deg{v}}}
			\bigTensor_{u \in \neigh{v}}{
				\ket{\delta_{t,u}}_{v,u,1}
				\ket{\delta_{t,u}}_{v,u,2}
			}
		}
	}
	\]
	Every vertex $v \in V$ then swaps its 
	$\left(v,u,1\right)$ 
	state with its neighbor $u \in \outNeigh{v}$'s 
	$\left(u,v,2\right)$
	thus then partial trace\cite[p.~105]{nielsen_quantum_2011} of the state of the graph at vertex v is:
	\[
	\ket{\psi_v}_{v} \otimes
	\sum_{t \in \neigh{v}}{
		\frac{1}{\sqrt{\deg{v}}}
		\bigTensor_{u \in \neigh{v}}{
			\left( \sum_{j \in \neigh{u}}{
				\frac{1}{\sqrt{\deg{v}}}
				\ket{\delta_{j,v}}_{v,u,1}}\right)
			\ket{\delta_{t,u}}_{v,u,2}
		}
	}
	\]
	or
	\[
	\ket{\psi_v}_{v} \otimes
	\sum_{t \in \neigh{v}}{
		\bigTensor_{u \in \neigh{v}}{
			\sum_{j \in \neigh{u}}{
				\frac{1}{\sqrt{\deg{u}}\sqrt{\deg{v}}}
				\ket{\delta_{j,v}}_{v,u,1}
				\ket{\delta_{t,u}}_{v,u,2}
			}
		}
	}
	\]
	thus $\ket{11}_{v,u,\square}$ will be measured only when $j=v \wedge t=u$ with probability 
	$\frac{1}{\deg{u}\deg{v}}$
	
	As we are assuming nodes do not have proximity to one another these swaps are implemented via bidirectional quantum teleportation.\cite{hassanpour_bidirectional_2015}
	Lastly we include in each node $\quoted{v}$ a qudit of identical dimension to the node's state labeled $\quoted{v`}$.  This state will act as a proxy for the following step.  
	When evaluating edge $v,u$, node $v$ will first perform a controlled swap on $\quoted{v}$ and $\quoted{v'}$, while $u$ will similarly perform a controlled swap on $quoted{u}$ and $\quoted{u'}$.  A standard bidirectional quantum teleportation is then performed between $v$ and $u$ to swap the components $\quoted{u'}$ and $\quoted{v'}$ and finally the controlled swaps are repeated by $u$ and $v$.  This process, while cumbersome is necessary to implement the controlled swap of two disparate quantum components that preserves the superposition due to the nature of its controls.
	\subsection{Leveraging the W-States}
	Consider a visualization of the W-States of the system. Let the system be described by 
	\[\bigTensor_{v \in V}\bigTensor_{u \in \neigh{v}}{\ket{\psi_v}_v \ket{\phi_{u,v}}_{v,u,1}
		\ket{\phi_{v,u}}_{v,u,2}}\]
	For all the configurations of $v \in V, u\in \neigh{v}, \phi_{v,u}$ consider highlighting edges such that for the directed edge $\left(a,b\right) \in E$, the half of the edge from $a$ is highlighted if $\phi_{a,b}=1 $ and half of the edge from $b$ is highlighted if $\phi_{b,a}=1$.
	
	\begin{Proposition}[Independent Edges]
		For all edges $\left(u,v \right) \in E$, the state of 
		$\ket{1}_{u,v,1}
		\ket{1}_{u,v,2} \implies \forall \left(j,k\right) \in E, $ incident to $\left(u,v\right)$, then 
		\[\ket{1}_{j,k,1}
		\ket{1}_{j,k,2} \implies
		\left(u,v\right) =   	 \left(j,k\right)
		\]
	\end{Proposition}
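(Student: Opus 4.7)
The plan is to trace, for each qubit at position $(u,v,i)$, what its post-swap value encodes in terms of the pre-swap W-state choices at the endpoints $u$ and $v$. First I would fix, for each vertex $w$, the unique neighbor selected by $w$'s $\deg{w}$-W-state --- call it $\tau(w) \in \neigh{w}$. By the bell pairing between the registers $(w,\square,1)$ and $(w,\square,2)$, these two registers are perfectly correlated, so before the swap the only index $t$ for which the qubits $(w,t,1)$ or $(w,t,2)$ can be in state $\ket{1}$ is $t = \tau(w)$.

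Next I would analyze the swap step, which exchanges the qubit $(v,u,1)$ with the qubit $(u,v,2)$ precisely when $(v,u) \in E$. For an oriented edge $(u,v) \in E$, the qubit $(u,v,1)$ (held at $u$) is swapped with $(v,u,2)$ (held at $v$), whereas $(u,v,2)$ is \emph{not} touched, because $(v,u) \notin E$ in an oriented graph. Hence after the swap one has $\ket{1}_{u,v,1}$ iff $\tau(v) = u$, and $\ket{1}_{u,v,2}$ iff $\tau(u) = v$. Consequently the joint outcome $\ket{1}_{u,v,1}\ket{1}_{u,v,2}$ is equivalent to the mutual-choice relation $\tau(u) = v \wedge \tau(v) = u$.

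Applying the same identification to any incident edge $(j,k) \in E$, the hypothesis $\ket{1}_{j,k,1}\ket{1}_{j,k,2}$ likewise forces $\tau(j) = k$ and $\tau(k) = j$. Since $\{j,k\} \cap \{u,v\} \neq \emptyset$, I would then split on the shared vertex: if $j = u$, then single-valuedness of $\tau$ gives $k = \tau(u) = v$, so $(j,k) = (u,v)$; the case $k = v$ is symmetric through $\tau(v)$; and the remaining possibilities $j = v$ or $k = u$ force $(j,k) = (v,u)$, which is excluded because the graph is oriented.

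The main subtlety I expect is the bookkeeping of which half of each edge-register is swapped and which is left alone, together with the observation that the orientation hypothesis is exactly what rules out a spurious partner edge $(v,u)$ from also satisfying the mutual-choice relation. A secondary sanity check is that this entire argument lives inside a single computational-basis branch with nonzero amplitude on $\ket{1}_{u,v,1}\ket{1}_{u,v,2}$, since the proposition is a statement about joint measurement outcomes rather than about amplitudes across the full superposition.
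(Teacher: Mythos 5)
Your proof is correct and rests on the same key fact as the paper's own proof: the $\deg{w}$-W-state at each vertex $w$ selects exactly one neighbor, so two distinct selected edges cannot share an endpoint. The paper phrases this as a short proof by contradiction on the flags $\phi_{u,v}$, taking the post-swap labeling as already established in the text, whereas you additionally re-derive that labeling from the swap and bell pairing via your $\tau$ map and dispose of the orientation cases $(j,k)=(v,u)$ explicitly --- more careful bookkeeping, but not a materially different argument.
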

	\begin{proof}
		Let $\left( u,v \right) \in E$ and 
		$\ket{1}_{u,v,1}
		\ket{1}_{u,v,2} \implies \phi_{u,v} = \phi_{v,u}=1$.
		
		For the purpose of contradiction, suppose $\exists j \in \neigh{u}\backslash\{v\}$ such that 
		
		$\ket{1}_{j,u,1} \ket{1}_{j,u,2}$.  This implies that $\phi_{u,j} = \phi_{j,u} = 1$ thus $\phi_{u,v} = \phi_{u,j} = 1 \implies v = j$, a contradiction.
		
		Similarly suppose $\exists k \in \neigh{v}\backslash\{u\}$ such that 
		$\ket{1}_{k,v,1} \ket{1}_{k,v,2}$.  This implies that $\phi_{v,k} = \phi_{k,v} = 1$ thus $\phi_{v,u} = \phi_{v,k} = 1 \implies u = k$, a contradiction.
		
		Thus all edges incident to a selected edge are themselves not selected.
	\end{proof}
	\subsection{Edge selection probability distribution}
	For an edge $\left(u,v\right) \in E$, what is the expected probability that it is a member of a randomly selected independent edge set $S$.  Notationally let us write 
	$p_{u,v} = \rm{Prob}\!\left(\left(u,v\right) \in S \right)$.  We will also assume that the probabilities of non-adjacent edges are independent, that is 
	\[\forall (u,v),(j,k) \in E, 
	\rm{Prob}\!\left( (u,v)\!\in\! S \mid (j,k)\in S \right) =
	\begin{cases}
		p_{u,v} & \{u,v\} \cap \{j,k\} = \emptyset\\
		1		& \{u,v\} \cap \{j,k\} = \{u,v\}\\
		0 		& otherwise
	\end{cases}\]
	Additionally we require that
	\[
	p_{u,\!v} \leq
	\prod_{t \in \neigh{u}\backslash\{v\}}{\left(1-p_{t,\!u}\right)}
	\prod_{w \in \neigh{v}\backslash\{u\}}{\left(1-p_{v,\!w}\right)}
	\]
	as the edge 
	$\left(u,v\right) \in S \implies 
	\forall t \in \neigh{u}, \left(t,u\right) \notin S\land
	\forall w \in \neigh{v}, \left(v,w\right) \notin S
	$.
	It should be noted that these conditions are clearly satisfiable, as assigning a value of $p$ gives us
	\[\begin{aligned}
		\prod_{t \in \neigh{u}\backslash\{v\}}{\left(1-p_{t,\!u}\right)}
		\prod_{w \in \neigh{v}\backslash\{u\}}{\left(1-p_{v,\!w}\right)}
		=\left(1-p\right)^{\deg{u}-1}
		\left(1-p\right)^{\deg{v}-1}\\
		=\left(1-p\right)^{\deg{u}+\deg{v}-2}
		\geq \left(1-p\right)^{2\Delta}
	\end{aligned}\]
	Thus solving for $p = (1-p)^{2\Delta}$ gives a value $p$ such that 
	\[
	\begin{aligned}
		p = (1-p)^{2\Delta} = (1-p)^{\left(\deg{u} + a\right) + \left(\deg{v} +b\right)} = 
		(1-p)^{\deg{u} + \deg{v}}
		(1-p)^{a+b} \\
		\leq (1-p)^{\deg{u} + \deg{v} - 2}
		=\prod_{t \in \neigh{u}\backslash\{v\}}{\left(1-p\right)}
		\prod_{w \in \neigh{v}\backslash\{u\}}{\left(1-p\right)}
	\end{aligned}\]
	therefore
	\[p_{u,v}=
	\begin{cases}
		p& \left(u,v\right) \in E\\
		0&	otherwise
	\end{cases}\]
	satisfies the requirements.
	\subsection{Advantages}
	Current work on quantum walkers is focused primarily on their capacity to spread quickly into a superposition over a clearly defined graph structure.  Their evolution, however, is dictated by full knowledge of the structure of a graph and implementing an appropriate coin system to direct the state's evolution.  By reflecting the graph's structure in the set of local quantum components, with fixed quantum communication channels, we remove the burden of requiring full knowledge of the graph to determine the walker's local behavior.
	\section{Example}
	\subsection{Example functions}
	Let us prepare a set of quantum operators that help in preparing the W-States of the system $\{U_W^{v} \mid v \in V\}$ such that 
	\[U_W^{v} \bigTensor_{u \in \neigh{v}}{\ket{0}_{v,u,1} \ket{0}_{v,u,2}} =
	\frac{1}{\sqrt{\deg{v}}}\left(
	\sum_{j \in \neigh{v}}{\bigTensor_{u \in \neigh{v}}{\ket{\delta_{j,u}}_{v,u,1} \ket{\delta_{j,u}}_{v,u,2}}}
	\right)
	\]
	thus for example:
	\[
	\ket{0}_{A,B,1}\ket{0}_{A,C,1}
	\ket{0}_{A,B,2}\ket{0}_{A,C,2}\]
	is transformed into
	\[
	\frac{1}{\sqrt{2}}
	\left(
	\ket{1}_{A,B,1}\ket{0}_{A,C,1}
	\ket{1}_{A,B,2}\ket{0}_{A,C,2}
	+
	\ket{0}_{A,B,1}\ket{1}_{A,C,1}
	\ket{0}_{A,B,2}\ket{1}_{A,C,2}
	\right)\]
	As $U_W^v$ behavior on $\ket{0}^\otimes$ can be viewed as a change of basis, the unitary clearly exists.
	Initializing the system with the state
	\[
	\bigTensor_{v \in V}{
		\ket{v}_{v}
		\bigTensor_{u \in \neigh{v}}{\ket{0}_{v,u,1} \ket{0}_{v,u,2}}}
	\]
	where the quantum component labeled $\quoted{v}$ has an orthonormal basis labeled by $v \in V$. 
	\begin{Notation}[Composition]
		We define $\unitaryComposition_{v\in V}{U_v}$ as the composition of a set of unitary operators.  This generally requires a fixed order on the elements of $V$, as the composition of unitaries is typically not commutative.
	\end{Notation}
	\begin{Remark}
		In this paper when unitary composition notation is used, the unitaries involved have mutually exclusive controls such that any non-commutative unitaries never satisfy their controls simultaneously.  Thus in this paper, all unitaries composed in such a way are all commutative and no ordering is needed.
	\end{Remark}
	Applying $\unitaryComposition_{v \in V}{U_W^v}$ to the system places all coin states in their appropriate W-State superposition.  
	Applying the appropriate control swaps 
	\[\unitaryComposition_{v \in V}{
		\unitaryComposition_{u \in \outNeigh{v}}
		{\text{SWAP}_{\left(v,u,1\right),\left(u,v,1\right)}
		}
	}\]
	consolidates the coin states governing the edge $\left(u,v\right) \in E$ in the quantum register labeled $\quoted{u,v,\square}$.
	The vertices states are then exchanged by the doubly controlled swap gates 
	$\unitaryComposition_{v \in V}\unitaryComposition_{u \in \outNeigh{v}}\text{CCSWAP}_{
		\left(v,u,1\right),
		\left(v,u,2\right),
		\left(v\right),
		\left(u\right),}$
	
	Repeating the process results in the states of vertices permuting through the graph via quantum transformations limited to neighboring vertices.
	\subsection{Watrous Implementation}
	Watrous first introduced the principle of quantum cellular automata\cite{watrous_one-dimensional_1995}.  By partitioning a quantum register and implementing fixed circulation of relative block indices one can ensure there is a degree of cellular interaction.  Using this circulation as a basis, applying a fixed unitary to each block uniformly enables WQCA to perform a large class of calculations.  
	Consider the diffusion across a infinite line graph $\Gamma$ with vertices $\lbrace v_i \rbrace_{i \in \mathbb{Z}}$. 
	Conforming to the structure previously discussed then, we group each block of 3 qubits into their own isolated $C_3$, the quantum registry associated with the block of vertices $\quoted{3i},\quoted{3i+1},\quoted{3i+2}$ is Figure \ref{fig:WQCARegistryDiagram}. We diverge slightly from Watrous' original model by giving each block additional qubits to drive the diffusion as previously outlined. 
	\begin{figure}[!ht]
		\centering
		\caption{Quantum Registry of WQCA block}
		\label{fig:WQCARegistryDiagram}
		\begin{tikzpicture}
			\node[rectangle,draw,anchor=east](statea){$\ket{\psi}_{3i}$};
			\node[rectangle,draw,anchor=north](stateproxa) at (statea.south){$\ket{0}_{3i'}$};
			\node[rectangle,draw,anchor=west](outState1aa) at (statea.east) {$\ket{\psi_{3i+1,1}}_{3i,3i+1,1}$};
			\node[rectangle,draw,anchor=west](outState1ba) at (outState1aa.east) {$\ket{\psi_{3i+2,1}}_{3i,3i+2,1}$};
			\node[rectangle,draw,anchor=north](outState2aa) at (outState1aa.south) {$\ket{\psi_{3i+1,2}}_{3i,3i+1,2}$};
			\node[rectangle,draw,anchor=west](outState2ba) at (outState2aa.east) {$\ket{\psi_{3i+2,2}}_{3i,3i+2,2}$};
			\node[rectangle,draw,anchor=north east](stateb) at (outState2aa.south west){$\ket{\psi}_{3i+1}$};
			\node[rectangle,draw,anchor=north](stateproxb) at (stateb.south){$\ket{0}_{3i+1'}$};
			\node[rectangle,draw,anchor=west](outState1ab)	at (stateb.east)			{$\ket{\psi_{3i,1}}_{	3i+1,	3i,1}$};
			\node[rectangle,draw,anchor=west](outState1bb)	at (outState1ab.east)	{$\ket{\psi_{3i+2,1}}_{	3i+1,	3i+2,1}$};
			\node[rectangle,draw,anchor=north](outState2ab)	at (outState1ab.south)	{$\ket{\psi_{3i,2}}_{	3i+1,	3i,2}$};
			\node[rectangle,draw,anchor=west](outState2bb)	at (outState2ab.east)	{$\ket{\psi_{3i+2,2}}_{	3i+1,	3i+2,2}$};
			\node[rectangle,draw,anchor=north east](statec) at (outState2ab.south west){$\ket{\psi}_{3i+2}$};
			\node[rectangle,draw,anchor=north](stateproxc) at (statec.south){$\ket{0}_{3i+2'}$}; 
			\node[rectangle,draw,anchor=west](outState1ac)	at (statec.east)			{$\ket{\psi_{3i,1}}_{	3i+2,	3i,1}$};
			\node[rectangle,draw,anchor=west](outState1bc)	at (outState1ac.east)	{$\ket{\psi_{3i+1,1}}_{	3i+2,	3i+1,1}$};
			\node[rectangle,draw,anchor=north](outState2ac)	at (outState1ac.south)	{$\ket{\psi_{3i,2}}_{	3i+2,	3i,2}$};
			\node[rectangle,draw,anchor=west](outState2bc)	at (outState2ac.east)	{$\ket{\psi_{3i+1,2}}_{	3i+2,	3i+1,2}$};
		\end{tikzpicture}
	\end{figure}
	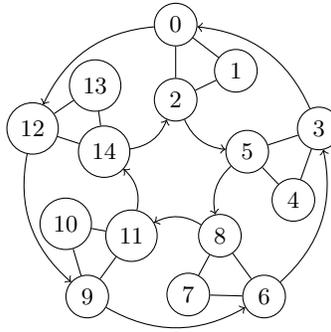
\begin{figure}[!ht]
		\centering
		\caption{15 node WQCA with node interacting edges}
		\label{fig:15WQCAgraph}
		\begin{tikzpicture}
			\graph[nodes={draw, circle,minimum size= 1em},edge={<-,bend left}, cycle, n=5, clockwise, radius=2cm, name=A]
			{
				1/"0", 2/"3", 3/"6", 4/"9", 5/"12"
			};
			\graph[nodes={draw, circle,minimum size= 1em}, n=5, clockwise, radius=1.6cm,phase=60,name = B]
			{
				1/"1", 2/"4", 3/"7", 4/"10", 5/"13"
			};
			\graph[nodes={draw, circle,minimum size= 1em},edge={->,bend right}, cycle, n=5, clockwise, radius=1cm, name = C]
			{
				1/"2", 2/"5", 3/"8", 4/"11", 5/"14"
			};
			\foreach \i [evaluate={\j=int(mod(\i+1+4,5)+1)}]% using Paul Gaborit's optimisation
			in {1,2,3,4,5}{
				\draw (A \i) -- (B \i) -- (C \i) -- (A \i);
			}
		\end{tikzpicture}
	\end{figure}
	As each node's $\quoted{i,\square,1}$ register is initialized as a 2-W-State, and bell paired with $\quoted{i,\square,2}$ we can make certain assumptions about measurement outcomes.  By design if $\quoted{i,j,1},\quoted{j,i,2}$ is measured as $1,1$, then necessarily $\forall u \in \neigh{i}\setminus j$ and $\forall v \in \neigh{j}\setminus i$ $\quoted{i,u,1},\quoted{u,i,2}$ or $\quoted{j,v,1},\quoted{v,j,2}$ cannot be measured as $1,1$.  Thus using them as controls on a $\mathtt{SWAP}$ will act as a semaphore on potential interacting operations. For example, examining the behavior of the unitary on the block of $\quoted{0},\quoted{1},\quoted{2}$ we have:
	\begin{equation}
		\begin{aligned}
			I 
			+	\ket{1}_{0,1,1}\ket{1}_{1,0,2}	\left( \mathtt{SWAP}_{0,1}	-	I	\right)	\bra{1}_{0,1,1}\bra{1}_{1,0,2}\\
			+	\ket{1}_{0,2,1}\ket{1}_{2,0,2}	\left( \mathtt{SWAP}_{0,2}	-	I	\right)	\bra{1}_{0,2,1}\bra{1}_{2,0,2}\\
			+	\ket{1}_{1,2,1}\ket{1}_{2,1,2}	\left( \mathtt{SWAP}_{1,2}	-	I	\right)	\bra{1}_{1,2,1}\bra{1}_{2,1,2}
		\end{aligned}
	\end{equation}
	In a simplified Watrous QCA let us have a graph cycle $C_{15}$, assigning a  qubit to each vertex $i$ labeled $\quoted{i}$. After the first step of WQCA, we have block contents swapping left and right via the inner and outer cycles as pictured in figure \ref{fig:15WQCAgraph} with the application of:
	\begin{equation}
		\bigotimes_{i=0}^{4}{\mathtt{SWAP}_{3i,3i+3\mathbin{\%}15} \mathtt{SWAP}_{3i,3i-3\mathbin{\%}15}}
	\end{equation}
	Initializing the system with all but one qubit as $0$ and one as $1$, we have:
	\begin{equation}
		\ket{1} \foreach \x in {2,...,15}{\ket{0}}
	\end{equation}
	Watrous's permutation gives us the state:
	\begin{equation}
		\foreach \x in {1,...,12}{\ket{0}}
		\ket{1} \foreach \x in {14,...,15}{\ket{0}} 
		= \ket{0}^{\otimes 12} \ket{1} \ket{0}	\ket{0}
	\end{equation}
	including the auxiliary states we have:
	\begin{equation}
		\begin{aligned}
			\foreach \x in {1,...,12}{\ket{0}}
			\ket{1} \foreach \x in {14,...,15}{\ket{0}} \\
			= \ket{0}^{\otimes 12} \ket{1} \ket{0}	\ket{0}
			\otimes \foreach \x in {0,...,5}{\ket{\psi}_{\x,\square,\square}}\\
			\foreach \x in {6,...,11}{\ket{\psi}_{\x,\square,\square}}\\
			\resizebox{.9\linewidth}{!}{$
				\begin{aligned}
			\frac{1}{\sqrt{2}}\left(\ket{0}_{12,13,1}\ket{0}_{12,13,2}+\ket{1}_{12,13,1}\ket{1}_{12,13,2}\right)
			\frac{1}{\sqrt{2}}\left(\ket{0}_{12,14,1}\ket{0}_{12,14,2}+\ket{1}_{12,14,1}\ket{1}_{12,14,2}\right)\\
			\frac{1}{\sqrt{2}}\left(\ket{0}_{13,12,1}\ket{0}_{13,12,2}+\ket{1}_{13,12,1}\ket{1}_{13,12,2}\right)
			\frac{1}{\sqrt{2}}\left(\ket{0}_{13,14,1}\ket{0}_{13,14,2}+\ket{1}_{13,14,1}\ket{1}_{13,14,2}\right)\\
			\frac{1}{\sqrt{2}}\left(\ket{0}_{14,12,1}\ket{0}_{14,12,2}+\ket{1}_{14,12,1}\ket{1}_{14,12,2}\right)
			\frac{1}{\sqrt{2}}\left(\ket{0}_{14,13,1}\ket{0}_{14,13,2}+\ket{1}_{14,13,1}\ket{1}_{14,13,2}\right)
			\end{aligned}$}
		\end{aligned}
	\end{equation}
	As nodes $0,...,11$ are all 0, their transformations can be omitted as they are in a quiescent state.  Applying the transformation to the remaining block of qubits gives us:
	\foreach \a/\b in {0/1,1/0}\foreach \c/\d in {0/1,1/0}\foreach \x/\y in {0/1,1/0}{
		\[\vspace{-1em}\ifnum\numexpr(\a+\c+\x)>0 + \fi
		\ket{\a}_{12,13,1}	\ket{\c}_{13,12,2}
		\ket{\b}_{12,14,1}	\ket{\x}_{14,12,2}
		\ket{\d}_{13,14,1}	\ket{\y}_{14,13,2}
		\ifnum\numexpr(\a+\c)=2 \ket{0}_{12}\ket{1}_{13}\ket{0}_{14}	\else \ifnum\numexpr(\b+\x)=2	\ket{0}_{12}\ket{0}_{13}\ket{1}_{14}	\else\ifnum\numexpr(\d+\y)=2	\ket{1}_{12}\ket{0}_{13}\ket{0}_{14} \else	\ket{1}_{12}\ket{0}_{13}\ket{0}_{14}\fi	\fi	\fi\]}

	\vspace{1em}Taking the partial trace of the system's edge flags shows that the resultant system is in a superposition of the 1 value being present in node $12,13,$ or $14$ with probability:
	\begin{tabular}{|c|c|}
		\hline
		Node & Prob \\
		\hline
		12 & $\frac{1}{2}$\\
		\hline
		13 & $\frac{1}{4}$\\
		\hline
		14 & $\frac{1}{4}$\\
		\hline
	\end{tabular}
	
	\subsection{Example}
	We initialize the system in the state:
	\[\begin{tikzpicture}[scale=.9,node distance={15mm},state/.style={draw, circle,minimum size=8mm}]
		\node[state](A) {$\ket{a}$};
		\node[xshift=3mm] at (A.60) {A};
		\node[state](B) [below of=A] {$\ket{b}$};
		\node[xshift=-5mm] at (B.270) {B};
		\node[state](C) at ($.5*($(A) + (B)$)+(1.5,0) $) {$\ket{c}$};
		\node[xshift=3mm] at (C.60) {C};
		\node[state](D) [right of=C] {$\ket{d}$};
		\node[xshift=3mm] at (D.60) {D};
		\draw[below] (A) -- (B) -- (C) -- (D);
		\draw[below] (A) -- (C);
	\end{tikzpicture}
	\]\[
	\begin{aligned}
		\ket{a}_A\ket{0}_{A,B,1}\ket{0}_{A,C,1}\ket{0}_{A,B,2}\ket{0}_{A,C,2}
		\ket{b}_B\ket{0}_{B,A,1}\ket{0}_{B,C,1}\ket{0}_{B,A,2}\ket{0}_{B,C,2}\\
		\ket{c}_C\ket{0}_{C,A,1}\ket{0}_{C,B,1}\ket{0}_{C,D,1}\ket{0}_{C,A,2}\ket{0}_{C,B,2}\ket{0}_{C,D,2}
		\ket{d}_D\ket{0}_{D,C,1}\ket{0}_{D,C,2}
	\end{aligned}
	\]
	where $\ket{a},\ket{b},\ket{c},\ket{d}$ are the quantum states of vertices $A,B,C,$ and $D$ respectively.  We assume here that these are all distinct states, but there is no requirement for them to be.
	Applying $U_W^{A}\circ U_W^{B}\circ U_W^{C}\circ U_W^{D}$ to the system gives
	\[
	\begin{aligned}
		\ket{a}_A
		\frac{1}{\sqrt{2}}\left(
		\ket{1}_{A,B,1}\ket{0}_{A,C,1}\ket{1}_{A,B,2}\ket{0}_{A,C,2}+
		\ket{0}_{A,B,1}\ket{1}_{A,C,1}\ket{0}_{A,B,2}\ket{1}_{A,C,2} \right)\\
		\ket{b}_B
		\frac{1}{\sqrt{2}}\left(
		\ket{1}_{B,A,1}\ket{0}_{B,C,1}\ket{1}_{B,A,2}\ket{0}_{B,C,2}+
		\ket{0}_{B,A,1}\ket{1}_{B,C,1}\ket{0}_{B,A,2}\ket{1}_{B,C,2}\right)\\
		\ket{c}_C
		\frac{1}{\sqrt{3}}\left(
		\begin{aligned}
			\ket{1}_{C,A,1}\ket{0}_{C,B,1}\ket{0}_{C,D,1}\ket{1}_{C,A,2}\ket{0}_{C,B,2}\ket{0}_{C,D,2}\\+
			\ket{0}_{C,A,1}\ket{1}_{C,B,1}\ket{0}_{C,D,1}\ket{0}_{C,A,2}\ket{1}_{C,B,2}\ket{0}_{C,D,2}\\+
			\ket{0}_{C,A,1}\ket{0}_{C,B,1}\ket{1}_{C,D,1}\ket{0}_{C,A,2}\ket{0}_{C,B,2}\ket{1}_{C,D,2}\end{aligned}\right)\\
		\ket{d}_D\ket{1}_{D,C,1}\ket{1}_{D,C,2}
	\end{aligned}
	\]
	which expands to a sum of states such as
	\[
	\begin{aligned}
		\ket{a}_A\ket{b}_B\ket{c}_C\ket{d}_D\ket{1}_{A,B,1}\ket{1}_{A,C,1}\ket{1}_{A,B,2}\ket{0}_{A,C,2}\ket{1}_{B,A,1}\ket{0}_{B,C,1}\ket{1}_{B,A,2}\\
		\otimes \ket{0}_{B,C,2}\ket{0}_{C,A,1}\ket{0}_{C,B,1}\ket{1}_{C,D,1}\ket{1}_{C,A,2}\ket{0}_{C,B,2}\ket{0}_{C,D,2}\ket{0}_{D,C,1}\ket{1}_{D,C,2}
	\end{aligned}\]
	which can be viewed as the state
	\[	\begin{tikzpicture}[scale=.9,node distance={15mm},state/.style={draw, circle,minimum size=8mm},highlight/.style={draw, gray, line width=4pt}]
		\node[state](A) {$\ket{a}$};
		\node[xshift=3mm] at (A.60) {A};
		\node[state](B) [below of=A] {$\ket{b}$};
		\node[xshift=-5mm] at (B.270) {B};
		\node[state](C) at ($.5*($(A) + (B)$)+(1.5,0) $) {$\ket{c}$};
		\node[xshift=3mm] at (C.60) {C};
		\node[state](D) [right of=C] {$\ket{d}$};
		\node[xshift=3mm] at (D.60) {D};
		\draw[below] (A) -- (B) -- (C) -- (D);
		\draw[below] (A) -- (C);
		\draw[highlight] (A) edge ($(A)!.5!(B)$) ;
		%	\draw[red,very thick] (A) edge ($(A)!.5!(C)$) ;
		\draw[highlight] (B) edge ($(B)!.5!(A)$) ;
		%	\draw[red,very thick] (B) edge ($(B)!.5!(C)$) ;
		\draw[highlight] (C) edge ($(C)!.5!(A)$) ;
		%	\draw[red,very thick] (C) edge ($(C)!.5!(B)$) ;
		%	\draw[red,very thick] (C) edge ($(C)!.5!(D)$) ;
		\draw[highlight] (D) edge ($(D)!.5!(C)$) ;
	\end{tikzpicture}\]
	where the edge $\left(A,B\right)$ is fully highlighted as $\ket{1}_{A,B,2}$ and $\ket{1}_{B,A,2}$.  Similarly half of $\left(A,B\right)$ is highlighted as $\ket{0}_{A,C,2}$ and $\ket{1}_{C,A,2}$.

	Applying $\unitaryComposition_{\left(u,v\right) \in E}{\text{CCSWAP}_{\left(u,v,1\right),\left(u,v,2\right),\left(u\right),\left(v\right)}}$ to this fragment of the system results in:
	\[
	\begin{aligned}
		\ket{b}_A\ket{a}_B\ket{c}_C\ket{d}_D\ket{1}_{A,B,1}\ket{1}_{A,C,1}\ket{1}_{A,B,2}\ket{0}_{A,C,2}\ket{1}_{B,A,1}\ket{0}_{B,C,1}\ket{1}_{B,A,2}\\
		\otimes \ket{0}_{B,C,2}\ket{0}_{C,A,1}\ket{0}_{C,B,1}\ket{1}_{C,D,1}\ket{1}_{C,A,2}\ket{0}_{C,B,2}\ket{0}_{C,D,2}\ket{0}_{D,C,1}\ket{1}_{D,C,2}
	\end{aligned}\]
	or
	\[	\begin{tikzpicture}[scale=.9,node distance={15mm},state/.style={draw, circle,minimum size=8mm},highlight/.style={draw, gray, line width=4pt}]
		\node[state](A) {$\ket{b}$};
		\node[xshift=3mm] at (A.60) {A};
		\node[state](B) [below of=A] {$\ket{a}$};
		\node[xshift=-5mm] at (B.270) {B};
		\node[state](C) at ($.5*($(A) + (B)$)+(1.5,0) $) {$\ket{c}$};
		\node[xshift=3mm] at (C.60) {C};
		\node[state](D) [right of=C] {$\ket{d}$};
		\node[xshift=3mm] at (D.60) {d};
		\draw[below] (A) -- (B) -- (C) -- (D);
		\draw[below] (A) -- (C);
		\draw[highlight] (A) edge ($(A)!.5!(B)$) ;
		%	\draw[red,very thick] (A) edge ($(A)!.5!(C)$) ;
		\draw[highlight] (B) edge ($(B)!.5!(A)$) ;
		%	\draw[red,very thick] (B) edge ($(B)!.5!(C)$) ;
		\draw[highlight] (C) edge ($(C)!.5!(A)$) ;
		%	\draw[red,very thick] (C) edge ($(C)!.5!(B)$) ;
		%	\draw[red,very thick] (C) edge ($(C)!.5!(D)$) ;
		\draw[highlight] (D) edge ($(D)!.5!(C)$) ;
	\end{tikzpicture}\]
	The full list of such configurations then is:
	\[\begin{aligned}
		\begin{tikzpicture}[scale=.20,node distance={15mm},state/.style={draw, circle,minimum size=8mm},highlight/.style={draw, gray, line width=4pt}]
			\node[state](A) {$\ket{b}$};
			\node[state](B) [below of=A] {$\ket{a}$};
			\node[state](C) at ($.5*($(A) + (B)$)+(5,0) $) {$\ket{c}$};
			\node[state](D) [right of=C] {$\ket{d}$};
			\draw[below] (A) -- (B) -- (C) -- (D);
			\draw[below] (A) -- (C);
			\node[xshift=3mm] at (A.60) {A};
			\node[xshift=-5mm] at (B.270) {B};
			\node[xshift=3mm] at (C.60) {C};
			\node[xshift=3mm] at (D.60) {D};
			\draw[highlight] (A) edge ($(A)!.5!(B)$) ;
			%	\draw[highlight] (A) edge ($(A)!.5!(C)$) ;
			\draw[highlight] (B) edge ($(B)!.5!(A)$) ;
			%	\draw[highlight] (B) edge ($(B)!.5!(C)$) ;
			\draw[highlight] (C) edge ($(C)!.5!(A)$) ;
			%	\draw[highlight] (C) edge ($(C)!.5!(B)$) ;
			%	\draw[highlight] (C) edge ($(C)!.5!(D)$) ;
			\draw[highlight] (D) edge ($(D)!.5!(C)$) ;
		\end{tikzpicture}
		\begin{tikzpicture}[scale=.20,node distance={15mm},state/.style={draw, circle,minimum size=8mm},highlight/.style={draw, gray, line width=4pt}]
			\node[state](A) {$\ket{c}$};
			\node[state](B) [below of=A] {$\ket{b}$};
			\node[state](C) at ($.5*($(A) + (B)$)+(5,0) $) {$\ket{a}$};
			\node[state](D) [right of=C] {$\ket{d}$};
			\draw[below] (A) -- (B) -- (C) -- (D);
			\draw[below] (A) -- (C);
			\node[xshift=3mm] at (A.60) {A};
			\node[xshift=-5mm] at (B.270) {B};
			\node[xshift=3mm] at (C.60) {C};
			\node[xshift=3mm] at (D.60) {D};
			%\draw[highlight] (A) edge ($(A)!.5!(B)$) ;
			\draw[highlight] (A) edge ($(A)!.5!(C)$) ;
			\draw[highlight] (B) edge ($(B)!.5!(A)$) ;
			%	\draw[highlight] (B) edge ($(B)!.5!(C)$) ;
			\draw[highlight] (C) edge ($(C)!.5!(A)$) ;
			%	\draw[highlight] (C) edge ($(C)!.5!(B)$) ;
			%	\draw[highlight] (C) edge ($(C)!.5!(D)$) ;
			\draw[highlight] (D) edge ($(D)!.5!(C)$) ;
		\end{tikzpicture}
		\begin{tikzpicture}[scale=.20,node distance={15mm},state/.style={draw, circle,minimum size=8mm},highlight/.style={draw, gray, line width=4pt}]
			\node[state](A) {$\ket{a}$};
			\node[state](B) [below of=A] {$\ket{b}$};
			\node[state](C) at ($.5*($(A) + (B)$)+(5,0) $) {$\ket{c}$};
			\node[state](D) [right of=C] {$\ket{d}$};
			\draw[below] (A) -- (B) -- (C) -- (D);
			\draw[below] (A) -- (C);
			\node[xshift=3mm] at (A.60) {A};
			\node[xshift=-5mm] at (B.270) {B};
			\node[xshift=3mm] at (C.60) {C};
			\node[xshift=3mm] at (D.60) {D};
			\draw[highlight] (A) edge ($(A)!.5!(B)$) ;
			%	\draw[highlight] (A) edge ($(A)!.5!(C)$) ;
			%\draw[highlight] (B) edge ($(B)!.5!(A)$) ;
			\draw[highlight] (B) edge ($(B)!.5!(C)$) ;
			\draw[highlight] (C) edge ($(C)!.5!(A)$) ;
			%	\draw[highlight] (C) edge ($(C)!.5!(B)$) ;
			%	\draw[highlight] (C) edge ($(C)!.5!(D)$) ;
			\draw[highlight] (D) edge ($(D)!.5!(C)$) ;
		\end{tikzpicture}
		\\\begin{tikzpicture}[scale=.20,node distance={15mm},state/.style={draw, circle,minimum size=8mm},highlight/.style={draw, gray, line width=4pt}]
			\node[state](A) {$\ket{c}$};
			\node[state](B) [below of=A] {$\ket{b}$};
			\node[state](C) at ($.5*($(A) + (B)$)+(5,0) $) {$\ket{a}$};
			\node[state](D) [right of=C] {$\ket{d}$};
			\draw[below] (A) -- (B) -- (C) -- (D);
			\draw[below] (A) -- (C);
			\node[xshift=3mm] at (A.60) {A};
			\node[xshift=-5mm] at (B.270) {B};
			\node[xshift=3mm] at (C.60) {C};
			\node[xshift=3mm] at (D.60) {D};
			%\draw[highlight] (A) edge ($(A)!.5!(B)$) ;
			\draw[highlight] (A) edge ($(A)!.5!(C)$) ;
			%\draw[highlight] (B) edge ($(B)!.5!(A)$) ;
			\draw[highlight] (B) edge ($(B)!.5!(C)$) ;
			\draw[highlight] (C) edge ($(C)!.5!(A)$) ;
			%	\draw[highlight] (C) edge ($(C)!.5!(B)$) ;
			%	\draw[highlight] (C) edge ($(C)!.5!(D)$) ;
			\draw[highlight] (D) edge ($(D)!.5!(C)$) ;
		\end{tikzpicture}
		\begin{tikzpicture}[scale=.20,node distance={15mm},state/.style={draw, circle,minimum size=8mm},highlight/.style={draw, gray, line width=4pt}]
			\node[state](A) {$\ket{b}$};
			\node[state](B) [below of=A] {$\ket{a}$};
			\node[state](C) at ($.5*($(A) + (B)$)+(5,0) $) {$\ket{c}$};
			\node[state](D) [right of=C] {$\ket{d}$};
			\draw[below] (A) -- (B) -- (C) -- (D);
			\draw[below] (A) -- (C);
			\node[xshift=3mm] at (A.60) {A};
			\node[xshift=-5mm] at (B.270) {B};
			\node[xshift=3mm] at (C.60) {C};
			\node[xshift=3mm] at (D.60) {D};
			\draw[highlight] (A) edge ($(A)!.5!(B)$) ;
			%	\draw[highlight] (A) edge ($(A)!.5!(C)$) ;
			\draw[highlight] (B) edge ($(B)!.5!(A)$) ;
			%	\draw[highlight] (B) edge ($(B)!.5!(C)$) ;
			%\draw[highlight] (C) edge ($(C)!.5!(A)$) ;
			\draw[highlight] (C) edge ($(C)!.5!(B)$) ;
			%	\draw[highlight] (C) edge ($(C)!.5!(D)$) ;
			\draw[highlight] (D) edge ($(D)!.5!(C)$) ;
		\end{tikzpicture}
		\begin{tikzpicture}[scale=.20,node distance={15mm},state/.style={draw, circle,minimum size=8mm},highlight/.style={draw, gray, line width=4pt}]
			\node[state](A) {$\ket{a}$};
			\node[state](B) [below of=A] {$\ket{b}$};
			\node[state](C) at ($.5*($(A) + (B)$)+(5,0) $) {$\ket{c}$};
			\node[state](D) [right of=C] {$\ket{d}$};
			\draw[below] (A) -- (B) -- (C) -- (D);
			\draw[below] (A) -- (C);
			\node[xshift=3mm] at (A.60) {A};
			\node[xshift=-5mm] at (B.270) {B};
			\node[xshift=3mm] at (C.60) {C};
			\node[xshift=3mm] at (D.60) {D};
			%\draw[highlight] (A) edge ($(A)!.5!(B)$) ;
			\draw[highlight] (A) edge ($(A)!.5!(C)$) ;
			\draw[highlight] (B) edge ($(B)!.5!(A)$) ;
			%	\draw[highlight] (B) edge ($(B)!.5!(C)$) ;
			%\draw[highlight] (C) edge ($(C)!.5!(A)$) ;
			\draw[highlight] (C) edge ($(C)!.5!(B)$) ;
			%	\draw[highlight] (C) edge ($(C)!.5!(D)$) ;
			\draw[highlight] (D) edge ($(D)!.5!(C)$) ;
		\end{tikzpicture}
		\\\begin{tikzpicture}[scale=.20,node distance={15mm},state/.style={draw, circle,minimum size=8mm},highlight/.style={draw, gray, line width=4pt}]
			\node[state](A) {$\ket{a}$};
			\node[state](B) [below of=A] {$\ket{c}$};
			\node[state](C) at ($.5*($(A) + (B)$)+(5,0) $) {$\ket{b}$};
			\node[state](D) [right of=C] {$\ket{d}$};
			\draw[below] (A) -- (B) -- (C) -- (D);
			\draw[below] (A) -- (C);
			\node[xshift=3mm] at (A.60) {A};
			\node[xshift=-5mm] at (B.270) {B};
			\node[xshift=3mm] at (C.60) {C};
			\node[xshift=3mm] at (D.60) {D};
			\draw[highlight] (A) edge ($(A)!.5!(B)$) ;
			%	\draw[highlight] (A) edge ($(A)!.5!(C)$) ;
			%\draw[highlight] (B) edge ($(B)!.5!(A)$) ;
			\draw[highlight] (B) edge ($(B)!.5!(C)$) ;
			%\draw[highlight] (C) edge ($(C)!.5!(A)$) ;
			\draw[highlight] (C) edge ($(C)!.5!(B)$) ;
			%	\draw[highlight] (C) edge ($(C)!.5!(D)$) ;
			\draw[highlight] (D) edge ($(D)!.5!(C)$) ;
		\end{tikzpicture}
		\begin{tikzpicture}[scale=.20,node distance={15mm},state/.style={draw, circle,minimum size=8mm},highlight/.style={draw, gray, line width=4pt}]
			\node[state](A) {$\ket{a}$};
			\node[state](B) [below of=A] {$\ket{c}$};
			\node[state](C) at ($.5*($(A) + (B)$)+(5,0) $) {$\ket{b}$};
			\node[state](D) [right of=C] {$\ket{d}$};
			\draw[below] (A) -- (B) -- (C) -- (D);
			\draw[below] (A) -- (C);
			\node[xshift=3mm] at (A.60) {A};
			\node[xshift=-5mm] at (B.270) {B};
			\node[xshift=3mm] at (C.60) {C};
			\node[xshift=3mm] at (D.60) {D};
			%\draw[highlight] (A) edge ($(A)!.5!(B)$) ;
			\draw[highlight] (A) edge ($(A)!.5!(C)$) ;
			%\draw[highlight] (B) edge ($(B)!.5!(A)$) ;
			\draw[highlight] (B) edge ($(B)!.5!(C)$) ;
			%\draw[highlight] (C) edge ($(C)!.5!(A)$) ;
			\draw[highlight] (C) edge ($(C)!.5!(B)$) ;
			%	\draw[highlight] (C) edge ($(C)!.5!(D)$) ;
			\draw[highlight] (D) edge ($(D)!.5!(C)$) ;
		\end{tikzpicture}
		\begin{tikzpicture}[scale=.20,node distance={15mm},state/.style={draw, circle,minimum size=8mm},highlight/.style={draw, gray, line width=4pt}]
			\node[state](A) {$\ket{b}$};
			\node[state](B) [below of=A] {$\ket{a}$};
			\node[state](C) at ($.5*($(A) + (B)$)+(5,0) $) {$\ket{d}$};
			\node[state](D) [right of=C] {$\ket{c}$};
			\draw[below] (A) -- (B) -- (C) -- (D);
			\draw[below] (A) -- (C);
			\node[xshift=3mm] at (A.60) {A};
			\node[xshift=-5mm] at (B.270) {B};
			\node[xshift=3mm] at (C.60) {C};
			\node[xshift=3mm] at (D.60) {D};
			\draw[highlight] (A) edge ($(A)!.5!(B)$) ;
			%	\draw[highlight] (A) edge ($(A)!.5!(C)$) ;
			\draw[highlight] (B) edge ($(B)!.5!(A)$) ;
			%	\draw[highlight] (B) edge ($(B)!.5!(C)$) ;
			%\draw[highlight] (C) edge ($(C)!.5!(A)$) ;
			%	\draw[highlight] (C) edge ($(C)!.5!(B)$) ;
			\draw[highlight] (C) edge ($(C)!.5!(D)$) ;
			\draw[highlight] (D) edge ($(D)!.5!(C)$) ;
		\end{tikzpicture}
		\\\begin{tikzpicture}[scale=.20,node distance={15mm},state/.style={draw, circle,minimum size=8mm},highlight/.style={draw, gray, line width=4pt}]
			\node[state](A) {$\ket{a}$};
			\node[state](B) [below of=A] {$\ket{b}$};
			\node[state](C) at ($.5*($(A) + (B)$)+(5,0) $) {$\ket{d}$};
			\node[state](D) [right of=C] {$\ket{c}$};
			\draw[below] (A) -- (B) -- (C) -- (D);
			\draw[below] (A) -- (C);
			\node[xshift=3mm] at (A.60) {A};
			\node[xshift=-5mm] at (B.270) {B};
			\node[xshift=3mm] at (C.60) {C};
			\node[xshift=3mm] at (D.60) {D};
			%\draw[highlight] (A) edge ($(A)!.5!(B)$) ;
			\draw[highlight] (A) edge ($(A)!.5!(C)$) ;
			\draw[highlight] (B) edge ($(B)!.5!(A)$) ;
			%	\draw[highlight] (B) edge ($(B)!.5!(C)$) ;
			%\draw[highlight] (C) edge ($(C)!.5!(A)$) ;
			%	\draw[highlight] (C) edge ($(C)!.5!(B)$) ;
			\draw[highlight] (C) edge ($(C)!.5!(D)$) ;
			\draw[highlight] (D) edge ($(D)!.5!(C)$) ;
		\end{tikzpicture}
		\begin{tikzpicture}[scale=.20,node distance={15mm},state/.style={draw, circle,minimum size=8mm},highlight/.style={draw, gray, line width=4pt}]
			\node[state](A) {$\ket{a}$};
			\node[state](B) [below of=A] {$\ket{b}$};
			\node[state](C) at ($.5*($(A) + (B)$)+(5,0) $) {$\ket{d}$};
			\node[state](D) [right of=C] {$\ket{c}$};
			\draw[below] (A) -- (B) -- (C) -- (D);
			\draw[below] (A) -- (C);
			\node[xshift=3mm] at (A.60) {A};
			\node[xshift=-5mm] at (B.270) {B};
			\node[xshift=3mm] at (C.60) {C};
			\node[xshift=3mm] at (D.60) {D};
			\draw[highlight] (A) edge ($(A)!.5!(B)$) ;
			%	\draw[red,very thick] (A) edge ($(A)!.5!(C)$) ;
			%\draw[red,very thick] (B) edge ($(B)!.5!(A)$) ;
			\draw[highlight] (B) edge ($(B)!.5!(C)$) ;
			%\draw[red,very thick] (C) edge ($(C)!.5!(A)$) ;
			%	\draw[red,very thick] (C) edge ($(C)!.5!(B)$) ;
			\draw[highlight] (C) edge ($(C)!.5!(D)$) ;
			\draw[highlight] (D) edge ($(D)!.5!(C)$) ;
		\end{tikzpicture}
		\begin{tikzpicture}[scale=.20,node distance={15mm},state/.style={draw, circle,minimum size=8mm},highlight/.style={draw, gray, line width=4pt}]
			\node[state](A) {$\ket{a}$};
			\node[state](B) [below of=A] {$\ket{b}$};
			\node[state](C) at ($.5*($(A) + (B)$)+(5,0) $) {$\ket{d}$};
			\node[state](D) [right of=C] {$\ket{c}$};
			\draw[below] (A) -- (B) -- (C) -- (D);
			\draw[below] (A) -- (C);
			\node[xshift=3mm] at (A.60) {A};
			\node[xshift=-5mm] at (B.270) {B};
			\node[xshift=3mm] at (C.60) {C};
			\node[xshift=3mm] at (D.60) {D};
			%\draw[red,very thick] (A) edge ($(A)!.5!(B)$) ;
			\draw[highlight] (A) edge ($(A)!.5!(C)$) ;
			%\draw[red,very thick] (B) edge ($(B)!.5!(A)$) ;
			\draw[highlight] (B) edge ($(B)!.5!(C)$) ;
			%\draw[red,very thick] (C) edge ($(C)!.5!(A)$) ;
			%	\draw[red,very thick] (C) edge ($(C)!.5!(B)$) ;
			\draw[highlight] (C) edge ($(C)!.5!(D)$) ;
			\draw[highlight] (D) edge ($(D)!.5!(C)$) ;
		\end{tikzpicture}
	\end{aligned}\]
	
	This enumeration of all potential configurations then illustrate that result of our process, a state where the values of vertices is superpositioned over all potential walker behaviors.  By design, no two adjacent edges are fully highlighted, and the probability of an edge being selected is independent of the selection of all non-adjacent edges.  We then have the superposition of these states as one round of diffusion of the states $\ket{a}$ through $\ket{d}$ which can be seen as one iteration of diffusion of vertex states. The resultant diffused state of the initial system then is $\frac{1}{\sqrt{12}}$ the sum of these 12 graphs in their Dirac notational form.
	
	To illustrate the results of this diffusion for example, when measuring $\quoted{A}$, we would take the partial trace of the state giving us
	\[
	\frac{7}{12}\ket{a}_{A}\bra{a}_{A}+
	\frac{3}{12}\ket{b}_{A}\bra{b}_{A}+
	\frac{2}{12}\ket{c}_{A}\bra{c}_{A}
	\]
	which corresponds to the probability distribution:
	\[\begin{tabular}{|c c|}
		\hline
		M(A)& \text{Prob}\\
		\hline
		a& $\frac{7}{12}$\\
		b& $\frac{3}{12}$\\
		c& $\frac{2}{12}$\\
		d& $\frac{0}{12}$\\
		\hline
	\end{tabular}\]
	
	\subsection{Directed Diffusion}
	By replacing CCSWAP with another doubly controlled transformation which only swaps particular states we can introduce much more complicated behaviors.  For example let us have two 3 level qudits $\ket{i}\ket{j}$ and a unitary that swap $\ket{0}$ with any state and will not swap $\ket{1}$ or $\ket{2}$.
	Then 
	\[
	\begin{bmatrix}
		1&  0&  0&  0&  0&  0&  0&  0& 0 \\%00
		0&  0&  0&  1&  0&  0&  0&  0& 0 \\%01
		0&  0&  0&  0&  0&  0&  1&  0& 0 \\%02
		0&  1&  0&  0&  0&  0&  0&  0& 0 \\%10
		0&  0&  0&  0&  1&  0&  0&  0& 0 \\%11
		0&  0&  0&  0&  0&  1&  0&  0& 0 \\%12
		0&  0&  1&  0&  0&  0&  0&  0& 0 \\%20
		0&  0&  0&  0&  0&  0&  0&  1& 0 \\%21
		0&  0&  0&  0&  0&  0&  0&  0& 1   %22
	\end{bmatrix}\]
	acts as a way for walkers presence in a vertex to effect how other walkers spread to that vertex.
	\section{Conclusion and Future work}
	In this paper we've defined an alternative form of implementing a multi part quantum walker on a graph.  While our implementation clearly enlarges the necessary quantum computational space needed to model and iterate a quantum walk it presents several distinctive advantages over previous approaches.  
	As the coins driving the walkers are distributed across all vertices according to the total degree of a vertex and all steps of the vertex's transformation are driven by the states of it and its neighbors, the process does not require full knowledge of the graph to implement.
	The potential substation of more conditional doubly controlled swaps allows us to further influence the evolution of the system while remaining largely ignorant of its exact state.
	
	As our alternative implementation of quantum walkers is significantly more resource demanding, we will first work toward minimizing the needed qubits to simulate the system's behavior.  Once we are able to simulate non-trivial graphs, we will begin to study the behavior of both undirected and directed diffusion on walkers.  Specifically we will construct graphs to evaluate the degree to which directed diffusion can be leveraged in optimizing the transmission of a walker from specified sources to specified endpoints.  By applying directed diffusion and using periodic partial measurements to condense the resultant diffused states could lead to a robust blind gradient of states traversing the graph.  In cluster state computation\cite{nielsen_fault-tolerant_2004} the combination of measurements with unitary operations drives quantum computations.  Combining cluster states techniques with directed diffusion will be investigated in future work.
	\bibliography{references}
	\bibliographystyle{plain}
\end{document}